\documentclass[oribibl]{llncs}
\usepackage{epic}
\usepackage{eepic}
\usepackage{epsf}

\newcommand{\real}{\mbox{I}\!\mbox{R}}

\newcommand{\svec}[3]{\left( \begin{array}{c} #1 \\ #2 \\ #3 \end{array} \right)}

\newcommand{\postscript}[2]
{\setlength{\epsfxsize}{#2\hsize}
\centerline{\epsfbox{#1}}}

\spnewtheorem{algorithm}{Algorithm}{\bfseries}{\itshape}

\begin{document}

\bibliographystyle{plain}

\title{Optimal Binary Search Trees with \\ 
       Near Minimal Height}
\author{Peter Becker}
\institute{Faculty of Computer Science \\
   University of Applied Sciences Bonn-Rhein-Sieg\\
   Grantham-Allee 20, 53757 Sankt Augustin, Germany\\
   \email{peter.becker@h-brs.de}
   }

\maketitle

\begin{abstract}
Suppose we have $n$ keys, $n$ access probabilities for the keys, and $n+1$
access probabilities for the gaps between the keys.
Let $h_{\min}(n)$ be the minimal height of a binary search tree for $n$ keys.
We consider the problem to construct an optimal binary search tree with near minimal
height, i.e.\ with height 
$h \leq h_{\min}(n) + \Delta$ for some fixed $\Delta$.
It is shown, that for any fixed $\Delta$ optimal binary search trees
with near minimal height
can be constructed in time $O(n^2)$. This is as fast as in the unrestricted case.

So far, the best known algorithms for the construction of height-restricted
optimal binary search trees have running time $O(L n^2)$,
whereby $L$ is the maximal permitted height.
Compared to these algorithms our algorithm is at least faster
by a factor of $\log_2 n$, because $L$ is lower bounded by $\log_2 n$.
\end{abstract}

\section{Introduction}

Suppose we have $n$ keys, $n$ access probabilities for the keys, and $n+1$
access probabilities for the gaps between the keys.
The problem to construct a binary search tree for these $n$ keys
that minimizes
the expected access time is known as the
\emph{optimal binary search tree problem}.
Knuth presented in \cite{knu71} a well-known dynamic programming algorithm
that solves this problem in $O(n^2)$ time.

Apart from the original problem,
the construction of optimal binary search trees whose heights are restricted 
has been considered in the literature.
By the height restriction the maximum number of comparisons during a search can
be bounded. Thus, an optimal height restricted binary search tree
performs well in both the worst and the average case.
Itai \cite{itai76} and Wessner \cite{wess76} independently discovered 
construction algorithms for height restricted binary search trees.
Their algorithms have running time $O(L n^2)$, where $L$ is the maximal permitted
height.

Let $h_{\min}(n) = \lceil \log_2 (n+1)\rceil$
be the \emph{minimal height of a binary search tree for $n$ keys}.
In this paper, we show that for any fixed $\Delta$ an optimal
binary search tree with height $h \leq h_{\min}(n) + \Delta$
can be constructed in time $O(n^2)$.
This improves the results from Itai and Wessner \cite{itai76,wess76}.
Because $L \geq \lceil \log_2 (n+1) \rceil$, the algorithms of Itai and Wessner
have running time $O(n^2 \log n)$ if we use them to construct optimal search
trees with height $h \leq h_{\min}(n) + \Delta$.

Gagie \cite{gag03, gag05} presents a $O(n)$ time algorithm
for the restructuring of optimal binary search trees.
His algorithm restructures an existing optimal binary search
in such a way that the resulting tree
has nearly optimal height and cost.
In contrast to Gagie's algorithm our algorithm always selects
the best binary search tree from the set of all trees with restricted height.

Other interesting facts about optimal binary search trees can be found in the
article of Nagaraj \cite{nag97}. This article gives a
comprehensive survey about optimal binary search trees.

All algorithms for the construction of optimal binary search trees, whether height
restricted or not, are based on dynamic programming.
They all use step by step construction of larger trees from smaller subtrees.
Instead of step by step construction from smaller subtrees we use a
decision model where the keys are placed by a sequential decision process in such a way
into the tree, that the costs become optimal. This approach is adopted from
the construction algorithm for optimal B-trees \cite{bec94}.

The rest of the paper is structured in the following way: in
Section 2 a formal description of the problem is given.
In Section 3 we present our approach:
the decision model is explained and
the attached dynamic program is formulated.
Section 4 states the solution algorithm
and gives the complexity results.
Section 5 summarizes the results.

\section{The Problem}

Now we give the problem formulation.
We have $n$ keys $k_1 < k_2 \ldots < k_n$
and $2n+1$ probabilities
$\alpha_0 , \beta_1 , \alpha_1 , \beta_2 , \ldots , \beta_n , \alpha_n $.

$\beta_i$ are the \emph{key weights} and $\alpha_j$ are the
\emph{gap weights}.
$\beta_i$ is the probability that key $k_i$ is requested, and
$\alpha_j$ is the
probability, that a search is made for a key $d$ with
$k_j < d < k_{j+1}$.
We assume that we have artificial keys $k_0 = - \infty$ and $k_{n+1} = \infty$.

Let $b_i$ be the level resp.\ the depth of the $i$-th internal node where key
$k_i$ is stored, and
let $a_j$ be the level of the external node for the gap between $k_j$ and $k_{j+1}$.
The root is on level $0$. 
For a binary search tree $T$ we define the
\emph{weighted path length} $\textnormal{wpl}(T)$ by
\[
\textnormal{wpl}(T) := \sum_{i=1}^n \beta_i (b_i+1) + \sum_{j=0}^n \alpha_j a_j
\]
The weighted path length is the expected number
of node visits resp.\ comparisons in a search.

The height $h(T)$ of a tree $T$ is defined as the level of the deepest external node.
The \emph{minimal height} $h_{\min}(n)$ of a binary search tree for $n$ keys
is then given by
\[
   h_{\min}(n) = \lceil \log_2(n+1) \rceil
\]

We want to construct search trees whose heights are nearly minimal.
Let $\Delta \geq 0$ be some fixed value.
The problem is to find a binary search tree $T$
that minimizes the weighted path length
$\textnormal{wpl}(T)$ subject to the constraint
$h(T) \leq h_{\min}(n) + \Delta$. 
Such a tree is denoted as an
\emph{optimal binary search tree with near minimal height}.

\section{Dynamic Programming Model}

We model the process of constructing an optimal binary search tree
with near minimal height
as a decision problem with $n$ stages. For every key $k_i$
we have to decide, on which level this key should be placed.
Whether placing on some level is feasible,
depends on the former decisions for the keys $k_1$ to $k_{i-1}$,
which define a certain state in the decision process.
Then placing the key $k_i$ on any level results in an increasing
weighted path length and a new state. The amount of increasing
as well as the new state depend on our decision.

Using this approach, the
optimal tree is the result of a sequence of optimal
decisions starting in a unique initial state. This leads to
a dynamic program $DP$ of the form $DP = (S_\nu ,A_\nu ,D_\nu ,
T_\nu , c_\nu , C_{n+1})$, where $n$ is the number of the
stages of $DP$, $S_\nu$ is the \emph{state set} of stage
$\nu , 1 \leq \nu \leq n+1$, and $A_\nu$ is the \emph{decision set}
of stage $\nu , 1 \leq \nu \leq n$. The sets
$D_\nu \subseteq S_\nu \times A_\nu$
define the \emph{feasible decisions} for the states of stage $\nu$.
It holds: $(s,a) \in D_\nu$, if and only if $a$
is feasible in state $s$ on stage $\nu$.
The set $D_\nu (s) := \{ a \in A_\nu | (s,a) \in D_\nu \}$
contains all feasible decisions for state $s$ on stage $\nu$.
$T_\nu : D_\nu \rightarrow S_{\nu +1}$ is
the \emph{transition function}.
Making decision $a$ in state $s$ at stage $\nu$ results in state
$T_\nu(s,a)$ at stage $\nu + 1$.
$c_\nu : D_\nu \rightarrow \real$ is the \emph{cost function} of
stage $\nu$. $c_\nu(s,a)$ gives the costs that arise if we
decide to make decision $a$ in state $s$ on stage $\nu$.
$C_{n+1} : S_{n+1} \rightarrow \real$ is the
\emph{terminal cost} function.
$C_{n+1}(s)$ gives the costs that arise if our final state
is $s$.

\begin{figure}[htb]
\postscript{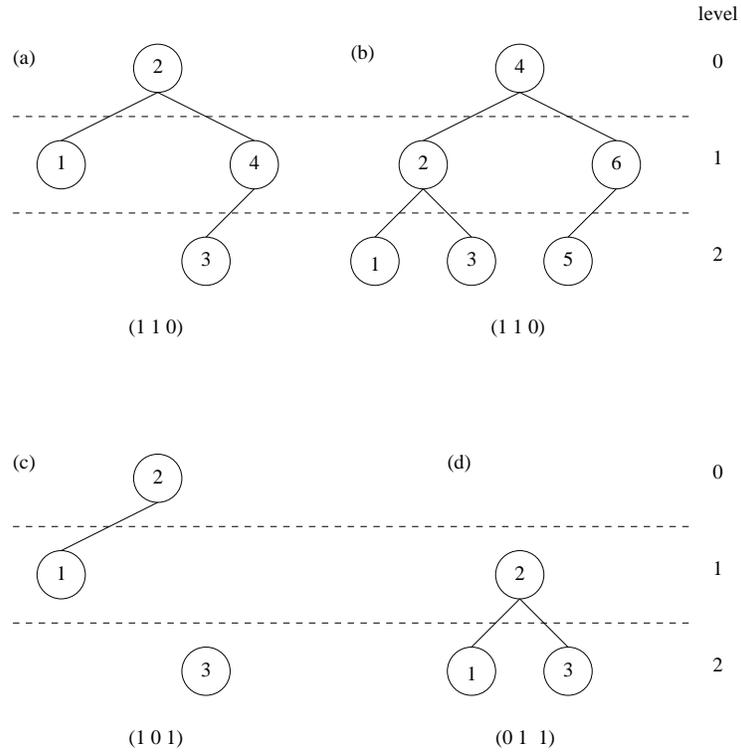}{0.8}
\caption{Tree states in the construction process}
\label{figstates}
\end{figure}

Now we have to define the components of the dynamic program
in such a way that the decision process models the
construction of a binary search tree with restricted height.
First we give the definition of the states.
For motivation take a look at Figure~\ref{figstates}.
Suppose we have $h_{\max} := h_{\min}(n) + \Delta = 3$, 
that means we can place the keys on levels from $0$ to $2$.

For a correct placing
of a key in the partial tree only the
rightmost path fragments from the actual root to the node that contains the largest key
is relevant. Due
to this fact we can represent a state $s \in S_\nu$
by a binary vector with $h_{\max}$ components.
We number the vector components from $0$ to $h_{\max} -1$.
Vector component $s_i$ is related to level $i$.
\[
    s = \svec{s_0}{\vdots}{s_{h_{\max}-1}} \textnormal{ with } s_i \in \{0,1\}.
\]
Each vector component $s_i$ determines, whether the level $i$ 
in the rightmost path is occupied.
More formally, vector component $s_i$ is $1$ if and only if the largest
key on level $i$ is greater than any key on the levels from $0$ to $i-1$.
For instance the state $s$ resulting from tree (a)
in Figure~\ref{figstates} is represented
by
\[
   s = \svec{1}{1}{0}
\]
and the state $s'$ resulting from tree (c) by 
\[
   s' = \svec{1}{0}{1}
\]
Observe, that different trees may have the same associated states.
For instance the trees (a) and (b) of Figure \ref{figstates} are both
represented by the same state.

The set $S_\nu$ is defined to be the set of all vectors
that are possible after the assignment of $\nu - 1$ keys.
The initial state set $S_1$ consists of a single state:
\[
   S_1 := \left\{ \svec{0}{\vdots}{0} \right\}
\]

A decision is characterized by the level on which a key is placed.
So we define $A = A_\nu = \{0 ,\ldots, h_{\max}-1 \}$.
Making decision $a$ means that the corresponding key is placed on level $a$.
For instance, the tree (a) in Figure~\ref{figstates} is constructed by the
decision sequence $DS = (1,0,2,1)$.

\begin{figure}[hbt]
\postscript{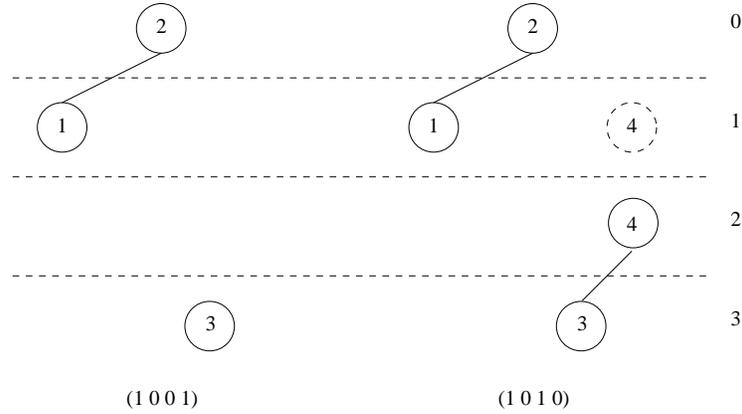}{0.8}
\caption{Feasible and infeasible decision}
\label{infeasible}
\end{figure}

Let $s = (s_0 ,\ldots, s_{h_{\max}-1})$ be a state.
A feasible decision $a$ for state $s$
has to fulfill the following conditions:
\begin{itemize}
\item[(i)]
   We can place keys only on unoccupied levels:
   \[
       s_a = 0
   \]
\item[(ii)]
   If a key is placed above some path fragment, this path fragment
   has to be the deepest path
   fragment and the key has to be placed directly above this path fragment:
   \[
      \not\exists i,j : a < i < j \textnormal{ and } s_i = 0 \textnormal{ and } s_j = 1
   \]
\end{itemize}
Condition (i) is obvious. Figure~\ref{infeasible} demonstrates condition (ii).
The next key $k_4$ has to be placed on level $2$, because $k_3$ becomes the left son
of $k_4$. If we place $k_4$ on level $1$, the left son would not be on the next
deeper level.

So we can define
\[
   D_\nu := \{ (s,a) | s \in S_\nu, a\;\textnormal{fulfills (i) and (ii)} \}
\]
Observe that the feasible decisions of a state $s$ are independent
of the stage $\nu$. So we define
\[
    D(s) := \{ a \in A | a\;\mbox{fulfills (i) to (ii)} \}
\]
as the \emph{set of feasible decisions for state $s$}.
For every binary search tree (with near minimal height) there exists a unique
feasible decision sequence that constructs the tree. As an
example see the decision sequence to construct tree (a) of Figure~1 (see
above). Using this definition each feasible decision sequence
leads to trees that are valid binary search trees with the exception of
the rightmost path.
Trees with invalid rightmost path on stage $n+1$
are filtered by the terminal cost function $C_{n+1}$ (see below).

\begin{figure}[htb]
\postscript{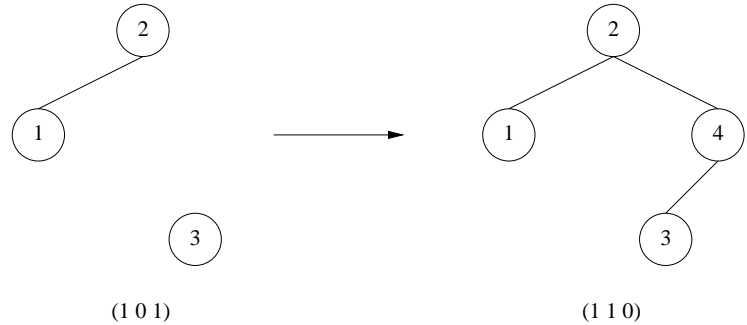}{0.8}
\caption{Example for a transition}
\label{transition}
\end{figure}

Making a decision $a$ has two effects. First, the
level $a$ of the rightmost path becomes occupied and
second, the levels from $a+1$ to $h_{\max}-1$ become unoccupied.
So the definition for the transition function is:
\[
   T(s,a) := T_\nu(s,a) = \left(\begin{array}{c} 
      s_0 \\ 
      \vdots \\
      s_{a-1} \\
      1 \\
      0 \\
      \vdots\\
      0 \end{array}\right)
\]
Figure~\ref{transition} shows an example for a single transition. 
The following state and decision sequence shows the transitions from the
initial state to the right tree of Figure~\ref{transition}.
\[
\svec{0}{0}{0} \stackrel{a=1}{\longrightarrow}
\svec{0}{1}{0} \stackrel{a=0}{\longrightarrow}
\svec{1}{0}{0} \stackrel{a=2}{\longrightarrow}
\svec{1}{0}{1} \stackrel{a=1}{\longrightarrow}
\svec{1}{1}{0}
\]

If we have a state $s \in S_\nu$, we can deduce from $s$ the preceding decision, 
i.e.\ the decision on stage $\nu -1$ that induced $s$.
Take a look at the transition function $T(s,a)$:
the largest $i$ with $s_i = 1$ defines this preceding decision.
\[
   \textnormal{precdec}(s) := \left\{
   \begin{array}{ll}
   0 & \textnormal{ if } s_0 = \cdots = s_{h_{\max}-1} = 0 \\
   \max \{ 0\leq i \leq h_{\max}-1 | s_i = 1 \} & \textnormal{ otherwise}
   \end{array}
   \right.
\]

Our cost function $c_\nu(s,a)$
has to consider two aspects: the level of key $k_\nu$ and
the level of the gap $(k_{\nu-1},k_\nu)$. The first is simple: the
level of key $k_\nu$ is determined by the decision $a$.
With the following Lemma, we are able the determine the level of the gap $(k_{\nu-1},k_\nu)$.
\begin{lemma}
\label{gaplevel}
Let $\textnormal{klevel}(k_\nu)$ denote the level of key $k_\nu$ and
let $\textnormal{glevel}(k_{\nu-1},k_\nu)$
denote the level of the gap $(k_{\nu-1},k_\nu)$. Then we have
\[
   \textnormal{glevel}(k_{\nu-1},k_\nu)= 1 + \max\{\textnormal{klevel}(k_{\nu-1}),
                                                   \textnormal{klevel}(k_\nu)\}
\]
\end{lemma}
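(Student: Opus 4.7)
My plan is to argue by a lowest-common-ancestor analysis. The external node for the gap $(k_{\nu-1},k_\nu)$ is, by definition, the leaf reached when searching for any value strictly between $k_{\nu-1}$ and $k_\nu$. So I need to identify where that leaf sits relative to the internal nodes holding $k_{\nu-1}$ and $k_\nu$, and then read off its level.

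The first step, which I expect to be the crux, is to show that among the two internal nodes holding the consecutive keys $k_{\nu-1}$ and $k_\nu$, one must be an ancestor of the other. Let $u$ be their lowest common ancestor and let $k_m$ be the key stored at $u$. By the BST ordering, every key in the left subtree of $u$ is less than $k_m$ and every key in the right subtree is greater than $k_m$. If $k_m \notin\{k_{\nu-1},k_\nu\}$, then since $k_{\nu-1}$ and $k_\nu$ are consecutive no key can satisfy $k_{\nu-1}<k_m<k_\nu$, so either $k_m\le k_{\nu-1}$ or $k_m\ge k_\nu$; in the first case both keys lie in the right subtree of $u$ and in the second case both lie in the left subtree, each contradicting the choice of $u$ as the lowest common ancestor. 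Hence $k_m\in\{k_{\nu-1},k_\nu\}$, and one of the two keys is an ancestor of the other.

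The second step handles the two symmetric cases. Suppose $k_{\nu-1}$ is the ancestor. Then $k_\nu$ lies in the right subtree of $k_{\nu-1}$; moreover, because there is no key strictly between them, $k_\nu$ is the minimum of that right subtree, so its left child is empty and that empty position is precisely the external node for the gap $(k_{\nu-1},k_\nu)$. Therefore
\[
\textnormal{glevel}(k_{\nu-1},k_\nu)=\textnormal{klevel}(k_\nu)+1,
\]
and since $k_\nu$ is a proper descendant of $k_{\nu-1}$ we have $\textnormal{klevel}(k_\nu)>\textnormal{klevel}(k_{\nu-1})$, so the right-hand side equals $1+\max\{\textnormal{klevel}(k_{\nu-1}),\textnormal{klevel}(k_\nu)\}$. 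If instead $k_\nu$ is the ancestor, a symmetric argument shows that the external node for the gap is the empty right child of $k_{\nu-1}$, giving $\textnormal{glevel}(k_{\nu-1},k_\nu)=\textnormal{klevel}(k_{\nu-1})+1$, which again equals $1+\max\{\textnormal{klevel}(k_{\nu-1}),\textnormal{klevel}(k_\nu)\}$. Combining the two cases yields the claimed identity.
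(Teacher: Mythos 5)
Your proof is correct and follows essentially the same route as the paper's: a case split on which of the two consecutive keys is an ancestor of the other, identifying the gap's external node as the empty left child of $k_\nu$ or the empty right child of $k_{\nu-1}$, i.e.\ one level below the deeper key. Your lowest-common-ancestor step merely makes rigorous the ancestor relationship that the paper asserts without proof.
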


\begin{proof}
Adjacent keys cannot be on the same level. So we have either
$\textnormal{klevel}(k_{\nu-1}) < \textnormal{klevel}(k_\nu)$
or
$\textnormal{klevel}(k_{\nu-1}) > \textnormal{klevel}(k_\nu)$.

In the case of $\textnormal{klevel}(k_{\nu-1}) < \textnormal{klevel}(k_\nu)$,
the key $k_\nu$ is in the right subtree of key $k_{\nu-1}$ and the
gap $(k_{\nu-1},k_\nu)$ is the left son of the node that contains $k_\nu$.
In the other case
the key $k_{\nu-1}$ is in the left subtree of key $k_\nu$ and the
gap $(k_{\nu-1},k_\nu)$ is the right son of the node that contains $k_{\nu-1}$.
In both cases the equation of Lemma~\ref{gaplevel} is valid.
\end{proof}

The cost functions $c_\nu(s,a)$ are defined by:
\[
   c_\nu(s,a) := 
                (1+\max\{\textnormal{precdec}(s),a\})\cdot\alpha_{\nu -1} + 
                (a+1)\cdot\beta_\nu
\]
This definition utilizes Lemma~\ref{gaplevel}: $\textnormal{klevel}(k_{\nu-1})$
is equivalent to $\textnormal{precdec}(s)$ and $\textnormal{klevel}(k_\nu)$ to
the decision $a$.

The terminal costs $C_{n+1}$ model whether our final
state fulfills the tree conditions.
In particular, we have to check whether the right most path contains
unoccupied levels above occupied levels.
For instance, tree (c) of Figure~\ref{figstates} is not a valid search tree
because level~$1$ is not occupied but level~$2$ is.
We have:
\[
   C_{n+1}(s) = \left\{
   \begin{array}{ll}
      (1 + \textnormal{precdec}(s))\cdot\alpha_n & \textnormal{ if } 
         s_0 = 1 \textnormal{ and } 
	 \not\exists i < j: s_i = 0 \wedge s_j = 1 \\
      \infty & \textnormal{ otherwise}
   \end{array}
   \right.
\]
To check whether there exists an unoccupied level we use an adaption of
condition (ii) of the feasible decision set $D(s)$. If the root level
is occupied and there exists no unoccupied level
above an occupied level
the terminal costs consist of the access probability $\alpha_n$
of the last gap  multiplied
by the level of key $k_n$ plus $1$.

Now the definition of the dynamic program $DP$ is
complete. Using this definition the optimization problem is
\[
F :=
\sum_{\nu = 1}^n c_\nu (s_\nu , a_\nu ) + C_{n+1} (s_{n+1})
\rightarrow \min
\]
subject to:
\[
\begin{array}{ll}
&   s_1 = (0 \cdots 0) \\
&   a_\nu \in D (s_\nu ), 1 \leq \nu \leq n \\
&   s_{\nu + 1} = T(s_\nu , a_\nu ), 1 \leq \nu \leq n
\end{array}
\]
The value $F$ of the objective function yields the
minimum weighted path length and the tree is given by
the optimal sequence $(a_1,\ldots,a_n)$ of feasible decisions.

\section{Algorithm and Complexity}

For the solution of this optimization problem we use a common
dynamic programming algorithm, cf.\ \cite{nm02}.

\begin{algorithm}
\label{dpalgo}\mbox{}\\
\begin{tabular}{rl}
(0) & /* Initialization */ \\
(1) & {\bf forall} $s \in S_{n+1}$ \\
(2) & \hspace*{0.5cm} $V_\nu(s) \leftarrow C_{n+1}(s)$ \\
(3) & /* Backward Computation */ \\
(4) & {\bf for} $\nu \leftarrow n$ {\bf downto} $1$ {\bf do} \\
(5) & \hspace*{0.5cm} {\bf forall} $s \in S_\nu$ {\bf do} \\
(6) & \hspace*{1.0cm} $V_\nu(s) \leftarrow \infty$ \\
(7) & \hspace*{1.0cm} $\pi_\nu(s) \leftarrow \textnormal{undefined}$ \\
(8) & \hspace*{1.0cm} {\bf forall} $a \in D(s)$ {\bf do} \\
(9) & \hspace*{1.5cm} {\bf if} $c_\nu(s,a) + V_{\nu+1}(T(s,a)) < V_\nu(s)$ {\bf then} \\
(10) & \hspace*{2.0cm} $V_\nu(s) \leftarrow c_\nu(s,a) + V_{\nu+1}(T(s,a))$ \\
(11) & \hspace*{2.0cm} $\pi_\nu(s) \leftarrow a$ \\
(12) & /* Forward Computation */ \\
(13) & $s \leftarrow (0 \cdots 0) $ \\
(14) & $F \leftarrow  V_1(s)$ \\
(15) & {\bf for} $\nu \leftarrow 1$ {\bf to} $n$ {\bf do} \\
(16) & \hspace*{0.5cm} $a_\nu \leftarrow \pi_\nu(s)$ \\
(17) & \hspace*{0.5cm} $s \leftarrow T(s, a_\nu)$ \\
\end{tabular}
\end{algorithm}

$V_\nu(s)$ is the \emph{value function} which represents the minimal costs to reach a terminal state
from state $s$ on stage $\nu$. In line (1) and (2) we initialize the value function with
the terminal costs.
$\pi_\nu(s)$ represents the optimal decision for state $s$ on stage $\nu$. 
The value function $V_\nu(s)$ and the optimal decision
$\pi_\nu(s)$ is
determined by the \emph{Bellman equation}
\[
    V_\nu(s) = \min_{a \in D(s)} \{c_\nu(s,a) + V_{\nu+1}(T(s,a)) \}
\]
which is solved for all states on
all stages in lines (4) to (11).

After the backward computation terminates, the $\pi_\nu$ define an
\emph{optimal policy}. To get the
optimal decision sequence
we apply the $\pi_\nu$ in a forward computation (line (13) to (17))
beginning with our initial state. As a result
the $a_\nu$ represent the decision sequence to build an optimal tree and the value of $F$ is the weighted
path length of the optimal tree.

With the decision sequence
$DS = (a_1 ,\ldots, a_n)$ that defines the optimal binary search tree
we are able to build the corresponding tree in linear time,
as for each key $k_\nu$ the level where $k_\nu$ has to be placed
is given by the decision $a_\nu$.

\begin{example}
Suppose we have keys $k_1,\ldots,k_4$ with access probabilities $\beta_1 = \frac{3}{16},
\beta_2 = \frac{1}{16}, \beta_3 = \frac{1}{2}, \beta_4 = \frac{1}{4}$ and
$\alpha_0 = \cdots = \alpha_4 = 0$. Let $\Delta=0$, that means we have
to construct a tree of height $\lceil \log_2(5)\rceil = 3$.

\begin{figure}[htb]
\begin{center}
\include{exproblem}
\end{center}
\caption{State space for the example problem}
\label{ssexample}
\end{figure}

\begin{figure}[htb]
\postscript{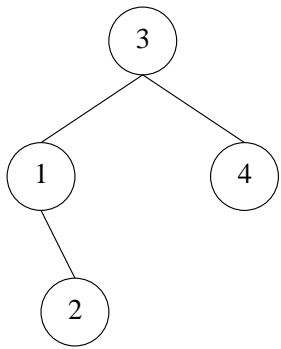}{0.3}
\caption{Optimal binary search tree for the example problem}
\label{ssexampletree}
\end{figure}

Figure~\ref{ssexample} shows the search graph for this problem.
The number adjacent to an arc represents the cost $c_\nu(s,a)$ of the corresponding transition.
The terminal costs $C_5(s)$ are shown below the states of state set $S_5$ and the value function $V_\nu(s)$
is shown
right beside the states for the state sets $S_1$ to $S_4$. Observe, that the the value function
of state $(1,1 1) \in S_4$ yields $\infty$ because of an empty decision set.

The best decision sequence $DS=(1,2,0,1)$ is given by the bold arcs. Its overall cost is
$\frac{25}{16}$, that means the corresponding
optimal binary search tree has a weighted path
length of $\frac{25}{16}$. Figure~\ref{ssexampletree}
shows the corresponding tree.
\end{example}


Our complexity results are based on bounds for the cardinality of the state sets $S_\nu$ and the
decision sets $D_\nu$.

\begin{theorem}
For all state sets $S_\nu\,(\nu=1,\ldots,n+1)$ we have:
\[
|S_\nu | \leq 2^{\Delta+1} (n+1)
\]
\end{theorem}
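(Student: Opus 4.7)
My plan is to prove the bound by a brute-force counting argument, without using any structural properties of $S_\nu$ beyond the fact that a state is represented as a binary vector of length $h_{\max}$.

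The central observation is that, by the definition of the state set, every $s \in S_\nu$ is a vector in $\{0,1\}^{h_{\max}}$, where $h_{\max} = h_{\min}(n) + \Delta = \lceil \log_2(n+1) \rceil + \Delta$. Consequently
\[
|S_\nu| \;\leq\; 2^{h_{\max}} \;=\; 2^\Delta \cdot 2^{\lceil \log_2(n+1) \rceil}.
\]
To finish, I would apply the elementary ceiling estimate $2^{\lceil \log_2(n+1) \rceil} \leq 2(n+1)$: if $n+1$ is a power of two then $2^{\lceil \log_2(n+1) \rceil} = n+1$, and otherwise $\lceil \log_2(n+1) \rceil < \log_2(n+1) + 1$, giving $2^{\lceil \log_2(n+1) \rceil} < 2(n+1)$. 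Substituting back yields $|S_\nu| \leq 2^\Delta \cdot 2(n+1) = 2^{\Delta+1}(n+1)$, as claimed.

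There is no real obstacle in this proof. The bound is quite loose because the feasibility conditions (i) and (ii) on the rightmost-path encoding are ignored and only the length $h_{\max}$ of the vector enters; nevertheless, the resulting estimate is of the right order, namely $O(n)$ for fixed $\Delta$, which is what is needed to feed into the subsequent $O(n^2)$ complexity analysis of the dynamic programming algorithm.
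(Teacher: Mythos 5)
Your proof is correct and follows essentially the same route as the paper: bound $|S_\nu|$ by the number of all binary vectors of length $h_{\max} = \lceil \log_2(n+1)\rceil + \Delta$, then use $2^{\lceil \log_2(n+1)\rceil} \leq 2(n+1)$. Your explicit justification of the ceiling estimate is a slightly more careful version of the step the paper performs implicitly.
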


\begin{proof}
Let $h_{\max}(n) := h_{\min}(n)+\Delta$ and
$S:=\{0,1\}^{h_{\max}(n)}$. With these definitions we get
\[
   |S_\nu | \leq |S| = 2^{h_{\max}(n)} = 2^{h_{\min}(n)+\Delta}
\]
Using $h_{\min}(n) = \lceil \log_2(n+1) \rceil$ we get
\begin{eqnarray*}
   |S| & \leq & 2^{\lceil \log_2(n+1) \rceil + \Delta} \\
       & \leq    & 2^{\Delta+1}\cdot 2^{\log_2(n+1)} \\
       & =       & 2^{\Delta+1}\cdot(n+1)
\end{eqnarray*}
\end{proof}

\begin{corollary}
For any fixed $\Delta$ the cardinality of the state sets $S_\nu$
is bounded by $O(n)$.
\end{corollary}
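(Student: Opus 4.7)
The plan is to derive the corollary as an immediate consequence of the preceding theorem. The theorem already gives the concrete bound $|S_\nu| \leq 2^{\Delta+1}(n+1)$ for every stage $\nu$, so no new combinatorial analysis of the state sets is required; all I need to do is interpret this bound asymptotically in $n$ under the hypothesis that $\Delta$ is fixed.

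Concretely, I would first fix $\Delta$ and set $c := 2^{\Delta+1}$. Since $\Delta$ does not depend on $n$, the factor $c$ is a constant with respect to $n$. Applying the theorem then yields $|S_\nu| \leq c\,(n+1)$ for all $\nu \in \{1,\ldots,n+1\}$, which is the definition of $|S_\nu| = O(n)$. The conclusion holds uniformly in $\nu$, so it applies to every individual state set as well as to the maximum over all stages.

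The only conceptually delicate point, and therefore the one I would flag explicitly, is that the $O$-notation hides the dependence on $\Delta$: the implicit constant grows exponentially in $\Delta$, so the statement is genuinely asymptotic in $n$ only and would fail to be linear if $\Delta$ were allowed to grow with $n$. This is not an obstacle to the proof itself, but it is the subtlety worth recording so that later complexity statements (which multiply $|S_\nu|$ with the cost per state) can be read correctly. Beyond this remark, the proof is essentially a one-line specialization of the theorem.
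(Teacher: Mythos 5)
Your proof is correct and matches the paper's (implicit) argument exactly: the corollary is an immediate specialization of the theorem's bound $|S_\nu| \leq 2^{\Delta+1}(n+1)$, with $2^{\Delta+1}$ absorbed into the $O$-constant once $\Delta$ is fixed. Your remark about the hidden exponential dependence on $\Delta$ is a sensible caveat but not needed for the statement as given.
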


\begin{theorem}
For all feasible decision sets $D_\nu\,(\nu=1,\ldots,n+1)$ we have:
\[
   |D_\nu| \leq 2^{\Delta+2} (n+1) 
\]
\end{theorem}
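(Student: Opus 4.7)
The plan is to bound $|D_\nu| = \sum_{s \in S_\nu} |D(s)|$ by separately controlling the size of $D(s)$ at each state and the number of states sharing a given value of $\textnormal{precdec}$.

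For the first bound, I would show that for any $s$ with $d := \textnormal{precdec}(s)$ and $s \neq (0,\ldots,0)$, one has $|D(s)| \le h_{\max} - d$, where $h_{\max} := h_{\min}(n)+\Delta$. The decisions split into three cases. The choice $a = d$ is forbidden by (i) since $s_d = 1$. Any $a > d$ satisfies (i) because $s_a = 0$, and satisfies (ii) vacuously because $s_j = 0$ for every $j > d$; this accounts for at most $h_{\max}-1-d$ feasible decisions. For $a < d$, setting $j = d$ in condition (ii) forces $s_{a+1} = \cdots = s_{d-1} = 1$, which together with $s_a = 0$ pins $a$ to the position immediately below the topmost run of $1$'s ending at $d$; hence at most one feasible decision. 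The all-zero initial state is handled separately: it lies only in $S_1$ and contributes $|D(s)| = h_{\max}$.

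For the counting bound, every $s \in \{0,1\}^{h_{\max}}$ with $\textnormal{precdec}(s) = d$ has $s_d = 1$ and $s_{d+1} = \cdots = s_{h_{\max}-1} = 0$, leaving only the $d$ lower coordinates free; hence $|\{s \in S_\nu : \textnormal{precdec}(s) = d\}| \le 2^d$. Combining the two bounds and applying the elementary identity $\sum_{d=0}^{h-1} 2^d(h-d) = 2^{h+1}-h-2$ yields
\[
   |D_\nu| \;\le\; h_{\max} + \sum_{d=0}^{h_{\max}-1} 2^d(h_{\max}-d) \;=\; 2^{h_{\max}+1} - 2.
\]
The estimate $2^{h_{\max}} \le 2^{\Delta+1}(n+1)$ from the preceding theorem then gives $|D_\nu| < 2^{h_{\max}+1} \le 2^{\Delta+2}(n+1)$, as required.

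The one delicate step is the per-state upper bound, specifically the claim that at most one decision below $d$ is feasible; this relies on reading condition (ii) with $j$ fixed at the deepest occupied level, which rigidly determines the admissible $a$. The remaining ingredients, namely counting binary vectors by $\textnormal{precdec}$ and the single arithmetic identity, are routine, and no explicit reachability analysis of $S_\nu$ is needed since the counting is carried out on the larger ambient set $\{0,1\}^{h_{\max}}$.
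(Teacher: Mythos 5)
Your proof is correct and follows essentially the same route as the paper: you partition the states by the deepest occupied level (your $\textnormal{precdec}(s)=d$ corresponds to the paper's grouping by $s_{h_{\max}-1},s_{h_{\max}-2},\ldots$), bound the number of feasible decisions in such a state by $h_{\max}-d$, count $2^d$ states per class, and evaluate the resulting sum $\sum_k k/2^k$ to get the factor $2\cdot 2^{h_{\max}}\le 2^{\Delta+2}(n+1)$. Your write-up is in fact somewhat more careful than the paper's, which merely asserts the per-state bound and absorbs the all-zero state into the infinite series.
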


\begin{proof}
Let $h_{\max}(n) := h_{\min}(n)+\Delta$,
$S:=\{0,1\}^{h_{\max}(n)}$ and $D := \{(s,a)| s \in S, a \textnormal{ is feasible for } s\}$.
With these definitions we get $|D_{\nu}| \leq |D|$ for all $\nu=1,\ldots,n$.

How many feasible decisions exists for a state $s \in S$? Take a look at condition (ii) in the
definition of $D(s)$ (see Section~3).
If $s_{h_{\max}-1} = 1$ there is at most one feasible decision $a$, which is
determined by the highest index $a$ with $s_a = 0$. That means, that half of all the states in $S$ have only
one feasible decision. States with $s_{h_{\max}-1} = 0$ and $s_{h_{\max}-2} = 1$, which comprise a quarter
of all states in $S$, have at most two decisions. Generalized, $\frac{1}{2^k}|S|$ states 
of all the states in $S$ have $k$ feasible
decisions. We get:
\begin{eqnarray*}
   |D_{\nu}| & \leq & |D| \\
             & \leq & 1\cdot\frac{1}{2}|S| + 2\cdot\frac{1}{4}|S| +
	              3\cdot\frac{1}{8}|S| + \cdots \\
	     & \leq    & \sum_{k=0}^{\infty} \frac{k}{2^k} \cdot |S| \\
             & =       & \left(\sum_{k=0}^{\infty} \frac{k+1}{2^k} -
	                 \sum_{k=0}^{\infty} \frac{1}{2^k}\right) \cdot |S| \\
	     & =       & \left(\frac{1}{(1-\frac{1}{2})^2} - 
	                       \frac{1}{1-\frac{1}{2}}\right) \cdot |S| \\
	     & =       & 2\cdot |S| \\
	     & \leq    & 2^{\Delta+2} (n+1)
\end{eqnarray*}
\end{proof}

\begin{corollary}
For any fixed $\Delta$ 
Algorithm~\ref{dpalgo} constructs an optimal binary search with
height $h \leq h_{\min}(n)+\Delta$ in time $O(n^2)$.
\end{corollary}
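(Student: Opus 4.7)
The plan is to combine the state- and decision-set cardinality bounds from Theorem~1 and Theorem~2 with a careful per-iteration analysis of the three phases of Algorithm~\ref{dpalgo}.

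First, I would split the running time according to the three phases of the algorithm. The initialization loop (lines~(1)--(2)) iterates over $S_{n+1}$, which by the corollary to Theorem~1 has size $O(n)$; evaluating $C_{n+1}(s)$ for one state is $O(1)$, so this phase costs $O(n)$. The backward computation (lines~(4)--(11)) is the dominant phase: the two nested loops over $s \in S_\nu$ and $a \in D(s)$ together enumerate exactly the pairs in $D_\nu$, of which Theorem~2 gives $O(n)$ per stage; summed over the $n$ stages this is $O(n^2)$ work, provided each $(s,a)$ update is constant time. The forward computation (lines~(13)--(17)) is a loop of $n$ constant-time steps, hence $O(n)$. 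Adding the three contributions yields $O(n^2)$ overall.

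Next, I would justify the constant-time-per-iteration claim. The operations needed inside the inner loop are computing $\textnormal{precdec}(s)$, forming $T(s,a)$, evaluating $c_\nu(s,a)$, and accessing the tables $V_{\nu+1}$ and $\pi_\nu$. For fixed $\Delta$ the state vectors have length $h_{\max} = \lceil \log_2(n+1)\rceil + \Delta = O(\log n)$, so a state fits in $O(1)$ machine words under the standard RAM model, and each of these operations reduces to a constant number of word operations (extraction of the highest set bit for $\textnormal{precdec}$, a bit mask to zero the tail of $s$ for $T(s,a)$, arithmetic on small integers for $c_\nu$, and an array lookup keyed by the integer encoding of the state). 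Enumerating $D(s)$ amounts to scanning the positions of $s$ with $s_i = 0$ that lie above the highest $1$-bit, which by the same reasoning is $O(|D(s)|)$.

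The main obstacle is precisely this constant-time claim: a naive traversal of the bit vector would cost $O(h_{\max}) = O(\log n)$ per iteration and push the total to $O(n^2 \log n)$, erasing the advertised improvement over Itai and Wessner. Correctness itself is immediate from general dynamic programming: the Bellman recursion solved in lines~(9)--(11) returns the optimal value of $DP$, which by the construction in Section~3 equals the minimum weighted path length over all binary search trees with $h(T) \le h_{\min}(n)+\Delta$, and the forward pass recovers an optimal decision sequence $(a_1,\ldots,a_n)$ from which the tree is built in linear additional time as already noted.
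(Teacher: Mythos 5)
Your proposal is correct and follows essentially the same route as the paper: bound $|S_\nu|$ and $|D_\nu|$ by $O(n)$ for fixed $\Delta$ via the two theorems, iterate over the $n$ stages, and observe that each update takes constant time. The only difference is that you explicitly justify the constant-time claim by packing the $O(\log n)$-bit state vector into $O(1)$ machine words in the word-RAM model, a point the paper simply asserts with ``all operations can be executed in constant time.''
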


\begin{proof}
We have to iterate over the $n$ stages from $n$ down to $1$.
In doing so, the cardinality of 
each state set $S_\nu$ and each feasible decision set $D_\nu$
is bounded by $O(n)$ for fixed $\Delta$. All operations can be executed in
constant time. It follows, that the overall running time is $O(n^2)$.
\end{proof}

\section{Summary}

We have presented a quadratic time algorithm to compute optimal binary search trees with near
minimal height, i.e.\ with height $h \leq h_{\min}(n) + \Delta$ and fixed $\Delta$.
The algorithm was adopted  
from the construction algorithm for optimal B-tress. The construction process was modeled by a
decision oriented dynamic program: In the model we have to decide key by key, on which level the
key should be placed. The tree conditions are represented by additional constraints and a
terminal cost function.

It seems to be easy to apply this approach to other kinds of trees.
By applying the construction algorithm of \cite{bec94},
it should be possible to construct
optimal B-trees with near minimal height and fixed order in quadratic time, too.
The construction of unrestricted optimal B-trees needs time $O(n^{2+\frac{\log 2}{\log k+1}})$.
A generalization of the binary tree model to multiway trees of a fixed order
should also lead to a quadratic time algorithm in constrast to the cubic time
algorithms for the unrestricted case \cite{got81, vkw80}. This means for both cases, that
optimal trees with near minimal height can be constructed faster than unrestricted trees. 
If we consider that optimal trees have typically a low height, the approach of height restriction may lead to
fast construction algorithms, which generate optimal trees with high probability.

\bibliography{paper}

\end{document}